\newtheorem{definition}{Definition}
\newtheorem{proposition}{Proposition}
\newenvironment{proof}[1][Proof]{\noindent\textbf{#1.} }{\ \rule{0.5em}{0.5em}}
\begin{document}

\title{A Simple Model of Monetary Policy under Phillips-Curve Causal
Disagreements\thanks{%
Financial support by ERC Advanced Investigator grant no. 692995 is
gratefully acknowledged. I thank In-Koo Cho, Stephane Dupraz and Morten Ravn
for helpful comments.}}
\author{Ran Spiegler\thanks{%
Tel Aviv University, University College London and CFM. URL:
http://www.tau.ac.il/\symbol{126}rani. E-mail: rani@tauex.tau.ac.il.}}
\maketitle

\begin{abstract}
I study a static textbook model of monetary policy and relax the
conventional assumption that the private sector has rational expectations.
Instead, the private sector forms inflation forecasts according to a
misspecified subjective model that disagrees with the central bank's (true)
model over the causal underpinnings of the Phillips Curve. Following the
AI/Statistics literature on Bayesian Networks, I represent the private
sector's model by a direct acyclic graph (DAG). I show that when the private
sector's model reverses the direction of causality between inflation and
output, the central bank's optimal policy can exhibit an attenuation effect
that is sensitive to the noisiness of the true inflation-output
equations.\bigskip \bigskip \pagebreak
\end{abstract}

\section{Introduction}

Monetary policy is a classic example of an economic interaction in which
policy makers' ability to achieve their objectives depends on the accuracy
of agents' forecasts of certain economic variables. Consider a textbook
static model of the type originated by Kydland and Prescott (1977) and Barro
and Gordon (1983). In such a model, the central bank controls a policy
variable that affects inflation. The private sector forms an inflation
forecast, possibly after observing some signal regarding the central bank's
information or its decision. Private-sector expectations are relevant
because they affect the joint realization of inflation and output, via some
kind of an \textquotedblleft expectations-augmented\textquotedblright\
Phillips curve. It follows that monetary policy involves \textquotedblleft 
\textit{expectations management}\textquotedblright . To quote Woodford
(2003, p. 15):\medskip 

\begin{quote}
\textquotedblleft \ldots successful monetary policy is not so much a matter
of effective control of overnight interest rates as it is of shaping market
expectations of the way in which interest rates, inflation and income are
likely to evolve\ldots \textquotedblright \medskip
\end{quote}

Conventional models constrain the central bank's ability to manage
expectations by assuming that the private sector has \textquotedblleft
rational expectations\textquotedblright\ - i.e., it fully understands the
statistical regularities in its environment, and thus forms an unbiased
inflation forecast conditional on its information. To put it differently,
the private sector shares the central bank's (true) model of the
macroeconomy and uses it to form its beliefs. In Evans and Honkapohja
(2005), Thomas Sargent refers to this feature as \textquotedblleft communism
of models\textquotedblright . In this paper, I revisit the textbook model
and relax \textquotedblleft model communism\textquotedblright , by assuming
that the private sector forms its forecast according to a different model
than the central bank's (true) model. The key difference between the two
models is in their \textit{causal treatment of the Phillips Curve}.

The modern history of macroeconomic thought has included a number of forms
of Phillips Curves, which differ in their dependent and R.H.S variables.
Popular descriptions of the Phillips Curve often impose a causal
interpretation on these forms. For instance, when the dependent variable is
inflation, the causal story is about how changes in unemployment affect
inflation. When inflation expectations enter the R.H.S, they are presented
as another cause of inflation. When the dependent variable is unemployment
or output, the story is about how inflation (possibly in addition to
inflationary expectations) affects real variables. Cogley and Sargent
(2005), following King and Watson (1994), describe this as a distinction
between two econometric identification procedures, which they term
\textquotedblleft Keynesian direction of fit\textquotedblright\ and
\textquotedblleft classical direction of fit\textquotedblright . Taking a
broader perspective, Hoover (2001) interprets debates in the history of
monetary theory (and other areas of macroeconomics) in terms of
disagreements over causality.

Therefore, if we wish to model how the private sector forms\ inflation
forecasts using some kind of a Phillips Curve, it seems reasonable to assume
that it will involve a causal interpretation. Even if the true model is not
fundamentally causal (in the sense that the variables are determined in
equilibrium by some system of simultaneous equations with influences in both
directions), popular conceptions of the Phillips Curve and how it is
integrated with a broader macroeconomic model are likely to be causal in
nature.

Following Spiegler (2016,2020a), I adopt the formalism of Bayesian networks
(Cowell et al. (1999), Pearl (2009), Koller and Friedman (2009)) and
represent the private sector's causal model with a directed acyclic graph
(DAG). Nodes in the graph represent economic variables and directed links
represent postulated causal influences. The DAG can be interpreted in at
least two ways. First, it may capture intuitive narratives about causal
relations between the relevant variables. Second, it may represent an
explicit formal model of the kind that professional forecasters sometimes
employ, consisting of a recursive system of equations. The private sector
fits its subjective model to the steady-state distribution, and then relies
on this \textquotedblleft estimated model\textquotedblright\ to produce an
inflation forecast. When the model is misspecified, the inflation forecast
systematically distorts the true expected inflation conditional on the
private sector's information.

I incorporate this model of private-sector expectations into a variant on
the familiar static textbook model of central-bank policy. The central bank
aims to minimize the expectation of a quadratic loss function of the
deviations of output and inflation from their exogenous targets. The output
target is fixed, whereas the inflation target is random. The central bank
chooses a policy instrument in response to its inflation target, trading off
the deviation of inflation from its moving target against the deviation of
output from its fixed target. A Phillips Curve links these two deviations.
Output deviations depend on how the private sector's inflation forecast
responds to its information. The key question is therefore how the private
sector's narrative regarding the Phillips Curve affects the responsiveness
of the private sector's inflation forecast, and in turn the responsiveness
of the central bank to the inflation target.

Two qualitative insights emerge from the analysis. First, the private
sector's wrong causal model can result in \textit{rigid} inflation
forecasts, which in turn leads to monetary-policy attenuation - that is, the
central bank's equilibrium response to the inflation target is weaker than
in the rational-expectations benchmark. Second, the \textit{variance of the
noise terms} in the equations for inflation and output (one of which is a
Phillips Curve) can play an important role in the rigid-forecast effect,
when the private sector's model reverses the direction of causality between
these two variables.

When macroeconomists talk about \textquotedblleft
flatness\textquotedblright\ of the Phillips relation, there is ambiguity
about whether this means a weak yet precise relation or a noisy relation. In
a finite sample, a noisy Phillips relation will appear as a
\textquotedblleft cloud\textquotedblright\ of data points, such that the
Phillips effect may end up being statistically insignificant - hence,
effectively regarded in much the same way as a weak, precise relation.
However, when the private sector has rational expectations, the two have
very different effects: the variance of the noise terms is irrelevant for
the private sector's forecast, hence also for the central bank's optimal
policy. This is no longer the case when the private sector's forecasts are
based on a misspecified model. Using examples, I show that the variance of
the noise terms in the inflation-output equations has a subtle effect on the
private sector's inflation forecast and consequently on central-bank
policy.\medskip 

\noindent \textit{Related literature}

\noindent The Bayesian-network approach to modeling equilibrium non-rational
expectations was presented in Spiegler (2016). In Spiegler (2020a), I
presented a simple monetary-policy example to illustrate the question of
whether causal misperceptions can lead to systematically biased
expectations. In this paper, I adhere to the standard linear-normal
parameterization, such that the private sector's expectations are correct on
average. Therefore, systematically biased inflation forecasts are ruled out;
the departure from rational expectations lies in the under-sensitivity of
the private sector's inflation forecast to its information.

The Game Theory literature contains a number of related approaches to
equilibrium modeling with non-rational expectations, including analogy-based
expectations (Jehiel (2005)), \textquotedblleft cursed\textquotedblright\
beliefs (Eyster and Rabin (2005)) and Berk-Nash equilibrium (Esponda and
Pouzo (2016)). Spiegler (2020b) describes the relation between these
approaches and the Bayesian-network formalism.

Within the macroeconomic theory literature, there are a few precedents for
modeling monetary policy when the rational-expectations assumption is
relaxed. Evans and Honkapohja (2001) and Woodford (2013) review dynamic
macroeconomic models in which agents form non-rational expectations, and
explore implications for monetary policy. Orphanides and Williams (2007) and
Garcia-Schmidt and Woodford (2015) are examples of exercises in this
tradition. In particular, Orphanides and Williams (2007) study the
implications of learning-based private-sector expectations for the structure
of central-bank policy.

The most closely related equilibrium concept that is employed in the
macroeconomic literature is known as \textquotedblleft restricted
perceptions equilibrium\textquotedblright\ (Evans and Honkapohja (2001)),
which is based on a notion of coarse beliefs in the same spirit as Piccione
and Rubinstein (2003) and Jehiel (2005). Sargent (2001), Cho et al. (2002)
and Esponda and Pouzo (2016) study models in which it is the \textit{central
bank} that forms non-rational expectations, whereas the private sector is
modeled conventionally.

The idea that real-life monetary policy may be insufficiently responsive to
shocks has been discussed in the literature since at least Rudenbusch
(2001). A number of explanations have been proposed. One of them,
originating in Brainard (1967), is that policy attenuation is a response to
uncertainty about relevant parameters or to Knightian uncertainty regarding
the central bank's model (see Tetlow and Von zur Muehlen (2001) and Dupraz
et al. (2020) for critical discussions). This paper contributes to this
literature by proposing a new possible motive for monetary-policy
distortion: the private sector's misperception of the causal relation
between inflation and output.

\section{The Model}

The following is a variant on a familiar textbook static model of monetary
policy in the tradition of Kydland and Prescott (1977) and Barro and Gordon
(1983). It most closely resembles Sargent (1999) and Athey et al. (2005).

All variables in the model get real values. A central bank observes an
exogenous, normally distributed variable $\theta $ and chooses an action $a$%
. Subsequently, the private sector observes a Gaussian signal $t$ of $%
(\theta ,a)$ and forms an inflation forecast $e$. Following the realization
of $\theta ,a,t,e$, actual inflation $\pi $ and actual output $y$ are
realized according to the following pair of equations:%
\begin{eqnarray*}
\pi &=&a+\varepsilon \\
y &=&\pi -\lambda e+\eta
\end{eqnarray*}%
where $\lambda \in (0,1)$ is a constant that represents the extent to which
anticipated inflation offsets the real effect of actual inflation; and $%
\varepsilon $ and $\eta $ are statistically independent noise terms, $%
\varepsilon \sim N(0,\sigma _{\varepsilon }^{2})$ and $\eta \sim N(0,\sigma
_{\eta }^{2})$.

The central bank's objective is to minimize a standard quadratic loss
function:%
\[
L(\pi ,y,\theta )=y^{2}+(\pi -\theta )^{2} 
\]%
Thus, the exogenous state variable $\theta $ is interpreted as the central
bank's ideal inflation target. The central bank's strategy is a function $%
(a(\theta ))_{\theta }$ that assigns an action to every realization of $%
\theta $. The private sector's strategy is the forecast function $(e(t))_{t}$
that assigns an inflation forecast to every signal $t$.

Given the central bank's strategy and private sector's forecast function, we
can define a joint distribution $p$ over all six variables $\theta
,a,t,e,y,\pi $. The exogenous components of $p$ are the distribution over $%
\theta $, the conditional distribution over $t$ (given $\theta ,a$) and the
conditional distribution of $\pi $ and $y$ (given $\theta ,a,t,e$). The
endogenous components are the pair consisting of the central bank's strategy 
$(a(\theta ))_{\theta }$ and the private sector's forecast $(e(t))_{t}$.

We now arrive at the key definition of how the private sector forms its
inflation forecast. The private sector is endowed with a subjective causal
model, which is represented by a \textit{directed acyclic graph} (DAG) $%
G=(N,R)$, where $N$ is a set of nodes and $R$ is a set of directed links.
Each node in $N$ represents a variable. Some variables may be omitted from
the causal model, but we assume that $t$, $y$ and $\pi $ are always
represented. A link represents a perceived direct causal influence.

It will be convenient to enumerate the six variables $x_{1},...,x_{6}$, such
that $N$ is some subset of $\{1,...,6\}$. Denote $x=(x_{1},...,x_{6})$. For
every $M\subseteq \{1,...,6\}$, $x_{M}=(x_{i})_{i\in M}$ is the projection
of $x$ on the variables represented by $M$. Let $R(i)$ be the set of nodes $%
j $ such that $G$ includes the link $j\rightarrow i$. In other words, $R(i)$
represents the set of variables that are considered to be direct causes of
the variable $x_{i}$. I sometimes use boldface notation of a variable to
represent its label, such that for any variable $z$,its label is $\mathbf{z}$%
. For instance, when $R$ postulates that $y$ is a direct cause of $\pi $, we
can rewrite this as $\mathbf{y}\in R(\mathbf{\pi })$.

To arrive at an inflation forecast, the private sector \textquotedblleft
estimates\textquotedblright\ its model by performing measurements that
quantify the postulated causal links, and combining the results of these
measurements in accordance with the causal model. In other words, the
private sector \textit{fits} the causal model $G$ to the joint distribution $%
p$. Formally, the private sector's belief is defined as follows:%
\begin{equation}
p_{G}(x_{N})=\dprod\limits_{i\in N}p(x_{i}\mid x_{R(i)})
\label{factorization}
\end{equation}%
This is a \textit{Bayesian-network factorization formula}: it factorizes the
objective joint distribution $p$ (defined over $x_{N}$) into a product of
conditional-probability terms. Each term is extracted from $p$, but the
terms are put together in a way that may result in $p_{G}\neq p$. As long as 
$p$ has full support, $p_{G}$ is a well-defined distribution, which induces
well-defined marginal and conditional distributions. We say that a
distribution $p$ is consistent with $G$ if $p_{G}=p$. Indeed, the objective
joint distribution $p$ is consistent with the following DAG, denoted $%
G^{\ast }$:%
\[
\begin{array}{ccccc}
\theta & \rightarrow & a & \rightarrow & \pi \\ 
\downarrow & \swarrow &  &  & \downarrow \\ 
t & \rightarrow & e & \rightarrow & y%
\end{array}%
\]

One interpretation of $p_{G}$ is that it is the limit belief of a
discrete-time process of Bayesian updating, where the prior has full support
over all distributions that are consistent with $G$, and the private sector
observes an independent draw from $p$ at every period (see Spiegler
(2020b)). Another interpretation, which is appropriate when $p$ is
multivariate normal, is that $G$ represents a recursive system of linear
regression equations, where every term $p(x_{i}\mid x_{R(i)})$ in (\ref%
{factorization}) corresponds to a linear regression equation for $x_{i}$, in
which the set of regressors is $R(i)$. In this case, (\ref{factorization})
defines a so-called \textit{Gaussian Bayesian network} (Koller and Friedman
(2009, Ch. 7)). Each equation is estimated via OLS against an arbitrarily
large sample, and the estimated equations are put together in accordance
with the assumption that the noise terms in the regression equations are
independent. This assumption is wrong when $G$ is a misspecified causal
model.

Because the private sector perceives statistical regularities through the
prism of an incorrect model, the subjective belief $p_{G}$ may
systematically distort the correlation structure of the actual distribution $%
p$.

The private sector uses its belief $p_{G}$ to obtain the inflation forecast:%
\begin{equation}
e=E_{G}(\pi \mid t)=\dint\nolimits_{\pi }\pi dp_{G}(\pi \mid t)
\label{private sector forecast}
\end{equation}%
For instance, if $G$ is%
\[
\begin{array}{ccccc}
\theta & \rightarrow & a & \rightarrow & y \\ 
& \searrow & \downarrow &  & \downarrow \\ 
&  & t &  & \pi%
\end{array}%
\]%
then 
\[
E_{G}(\pi \mid t)=\int_{\theta ,a,y,\pi }p(\theta ,a\mid t)p(y\mid a)p(\pi
\mid a)\pi 
\]%
As we shall see, because $G$ is a misspecified model, $e$ may be sensitive
to all features of the joint distribution of $p$, including those that would
not matter under rational expectations.\medskip

\begin{definition}[Equilibrium]
The pair $(a(\theta ))_{\theta }$ and $(e(t))_{t}$ constitutes an $%
equilibrium$ if: $(i)$ for every $\theta $, $a(\theta )$ maximizes the
central bank's expected payoff given $(e(t))_{t}$; and $(ii)$ $(e(t))_{t}$
is given by (\ref{private sector forecast}).\medskip
\end{definition}

\begin{definition}[Linear equilibrium]
An equilibrium $(a(\theta ))_{\theta }$ and $(e(t))_{t}$ is $linear$ if $%
a(\theta )$ and $e(t)$ are linear functions.\medskip
\end{definition}

Throughout the paper, I restrict attention to linear equilibria. This
ensures that the joint distribution $p$ is multivariate normal.\medskip

\noindent \textit{Rational-expectations benchmark}

\noindent As a benchmark, suppose that the private sector has rational
expectations. That is, its subjective DAG is $G^{\ast }$ itself.
Furthermore, suppose it is fully informed of $\theta ,a$. Then, its
inflation forecast conditional on $\theta ,a$ is%
\[
e=E(\pi \mid \theta ,a)=E(\pi \mid a)=a 
\]%
Therefore,%
\[
y=a-\lambda a+\varepsilon +\eta 
\]%
The central bank's expected cost from an action $a$ given the realization of 
$\theta $ is:%
\[
(1-\lambda )^{2}a^{2}+(a-\theta )^{2}+2\sigma _{\varepsilon }^{2}+\sigma
_{\eta }^{2} 
\]%
Therefore, the central bank's optimal action given $\theta $ is%
\[
a^{\ast }(\theta )=\frac{1}{(1-\lambda )^{2}+1}\theta 
\]

The solution $a^{\ast }(\theta )$ is not fully responsive to changes in $%
\theta $. However, as $\lambda $ tends to one - such that only unanticipated
inflation matters for output - the central bank's policy approaches fully
flexible targeting. Another key observation is that $a^{\ast }(\theta )$ is
invariant to $\sigma _{\varepsilon }^{2}$ and $\sigma _{\eta }^{2}$.\medskip
\medskip

\noindent \textit{Discussion: Model-based forecasts}

\noindent The private sector's forecasting method consists of two steps: if
first estimates a causal model and then uses the estimated model to generate
an inflation forecast. The exact interpretation of this process depends on
whether we think of private-sector agents as lay persons who reason about
macroeconomic processes informally and intuitively, or as professional
forecasters. Under the former interpretation, the idea is that agents have
some qualitative causal perceptions of how macroeconomic variables relate to
one another, and they interpret observational data in light of these
perceptions.

As to the latter interpretation, it is clear that not all professional
forecasting is model-based. However, models do play a role in forecasting.
One virtue of models is that they ground the forecast in a \textquotedblleft
story\textquotedblright\ that can be communicated to other parties (e.g. see
Edge et al. (2008)). Another advantage of a model is that it is a simple
vehicle for making a multitude of conditional predictions. See, for example,
the following quote from a recent speech by Stanley Fisher:
\textquotedblleft The economy is an extremely complicated mechanism, and
every macroeconomic model is a vast simplification of reality\ldots the
large scale of FRB/US is an advantage in that it can perform a wide variety
of computational \textquotedblleft what if\textquotedblright\
experiments.\textquotedblright \footnote{%
See https://www.federalreserve.gov/newsevents/speech/fischer20170211a.htm.}
For a critical discussion of theory-based forecasting, see Giacomini (2015).

Thus, we should not think of the private sector as being exclusively
interested in predicting inflation - otherwise, there would be no need to
rely on a model; the private sector could simply measure the steady-state
conditional expectation $E(\pi \mid t)$. Instead, the private sector employs
a model as a multi-purpose vehicle for making sense of statistical
regularities in the macroeconomic environment, and for making conditional
predictions as the occasion arises. The model is a knowing simplification of
a complex environment, and therefore the private sector expects it to get
some things wrong, and would be unfazed if some of the model's assumptions
and predictions turned out to be incorrect (defending this stance with
familiar statements like \textquotedblleft every model is
wrong\textquotedblright\ or \textquotedblleft it takes a model to beat a
model\textquotedblright ).

\section{The Basic Result}

The model assumes that the private sector has a correct causal model of the
joint behavior of the variables $\theta ,a,t,e$. The private sector's error
is that it gets the direct causes of $\pi $ or $y$ wrong. In the true model $%
G^{\ast }=(N^{\ast },R^{\ast })$, $\mathbf{y}\notin R^{\ast }(\mathbf{\pi })$
- that is, output is not a direct cause of inflation, because causality runs
in the opposite direction. Our first result establishes that if the private
sector's causal model $G$ shares this property, its inflation forecast in a
linear equilibrium is unaffected by $\sigma _{\varepsilon }^{2}$ and $\sigma
_{\eta }^{2}$. Moreover, the equilibrium can be \textquotedblleft
rationalized\textquotedblright : it can be obtained by assuming that $%
G=G^{\ast }$ while modifying the private sector's signal function, given by $%
(p(t\mid \theta ,a))$.\medskip

\begin{proposition}
\label{prop_no_reverse}Suppose $G=(N,R)$ satisfies $\mathbf{y}\notin R(%
\mathbf{\pi })$. Then:\newline
\noindent $(i)$ Any linear equilibrium is invariant to $(\sigma
_{\varepsilon }^{2},\sigma _{\eta }^{2})$.\newline
\noindent $(ii)$ Any linear equilibrium can be obtained from a specification
of the primitives in which the private sector's causal model is $G^{\ast }$,
and the private sector is either fully informed or entirely uninformed of $%
(\theta ,a)$.
\end{proposition}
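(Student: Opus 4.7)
The plan is to compute $E_G(\pi\mid t)$ explicitly via the Bayesian-network factorization, show it is a functional only of the $p$-distribution of $(\theta,a,t,e)$, and separately verify that the central bank's first-order condition does not see $(\sigma_\varepsilon^2,\sigma_\eta^2)$. For (ii) I would then case-split on whether $R(\pi)$ is empty, using the single-factor linear-Gaussian structure to collapse the forecast to one of two standard objects.

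First I would note that, under the maintained assumption that $G$ agrees with $G^{\ast}$ on $\{\theta,a,t,e\}$, the hypothesis $\mathbf{y}\notin R(\mathbf{\pi})$ forces $y$ to be a sink in $G$: its only potential child was $\pi$. Integrating $y$ out of $p_G$ is therefore trivial, and the remaining factorization
\[
p_G(\theta,a,t,e,\pi)=p(\theta)\,p(a\mid\theta)\,p(t\mid\theta,a)\,p(e\mid t)\,p(\pi\mid x_{R(\pi)}),
\]
combined with $R(\pi)\subseteq\{\theta,a,t,e\}$, $\pi=a+\varepsilon$ and $\varepsilon\perp(\theta,a,t,e)$, gives $\int\pi\,p(\pi\mid x_{R(\pi)})\,d\pi=E_p(a\mid x_{R(\pi)})$. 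After integrating out $\theta,a,e$ against the true conditionals and dividing by $p_G(t)=p(t)$, I would arrive at
\[
E_G(\pi\mid t)=E_p\bigl[E_p(a\mid x_{R(\pi)})\bigm|\,t\bigr],
\]
whose right-hand side depends only on the $p$-joint of $(\theta,a,t,e)$, hence on neither $\sigma_\varepsilon^2$ nor $\sigma_\eta^2$. The CB's expected loss expands as $(a-\lambda E(e\mid\theta,a))^2+\lambda^2\mathrm{Var}(e\mid\theta,a)+(a-\theta)^2+2\sigma_\varepsilon^2+\sigma_\eta^2$, so the noise variances drop out of the FOC in $a$. These two invariances together establish (i).

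For (ii) I would case-split on $R(\pi)$. If $R(\pi)=\emptyset$ then $E_p(a\mid x_{R(\pi)})=E(a)$, so $e$ is the constant $E(a)$, which is exactly the rational-expectations forecast under $G^{\ast}$ when the signal is independent of $(\theta,a)$. Otherwise, in a linear equilibrium $a=\alpha\theta$ is deterministic given the single fundamental $\theta$, so $E_p(a\mid x_{R(\pi)})=\alpha\,\widehat{\theta}_{R(\pi)}$ with $\widehat{\theta}_{R(\pi)}$ the best linear predictor of $\theta$ from the Gaussian information $x_{R(\pi)}$; a direct covariance-ratio computation then yields $E_p[\widehat{\theta}_{R(\pi)}\mid t]=E_p(\theta\mid t)$, so the forecast equals $\alpha E_p(\theta\mid t)=E_p(a\mid t)$, matching the $G^{\ast}$ forecast under the ``fully informed'' signal. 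The CB's best response therefore coincides with the one generated by the corresponding rational-expectations specification in each case. The delicate step is precisely this iterated-projection claim: one must verify that for every non-empty $R(\pi)\subseteq\{\theta,a,t,e\}$, whatever information $R(\pi)$ contains about $a$ beyond $t$ is redundant for predicting $a$ once $t$ is given. The single-factor structure of $a$---it has no randomness conditional on $\theta$---is what makes the tower property applied to the joint Gaussian $(\theta,a,t,e,x_{R(\pi)})$ produce this collapse cleanly.
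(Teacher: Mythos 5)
Your proposal is correct in substance and follows essentially the same route as the paper: for part $(i)$, the same factorization argument reducing $E_{G}(\pi \mid t)$ to $E_{p}\left[ E_{p}(a\mid x_{R(\mathbf{\pi })})\mid t\right] $, a functional of the joint law of $(\theta ,a,t,e)$ alone (you additionally make explicit that $\sigma _{\varepsilon }^{2},\sigma _{\eta }^{2}$ enter the bank's objective only as additive constants, a step the paper leaves implicit); for part $(ii)$, the same case analysis on $R(\mathbf{\pi })$. Two caveats. First, where the paper's non-empty cases use the conditional independence $\pi \perp (\theta ,t,e)\mid a$ together with the observation that, in a linear equilibrium, conditioning on $\mathbf{e}$ is equivalent to conditioning on $\mathbf{t}$ (and $\mathbf{\theta }$ to $\mathbf{a}$), you use the determinism $a=\alpha \theta $ and an iterated-projection collapse; this is the same idea, but your blanket claim $E_{p}\left[ E_{p}(\theta \mid x_{R(\mathbf{\pi })})\mid t\right] =E_{p}(\theta \mid t)$ requires a non-degeneracy caveat when $R(\mathbf{\pi })\subseteq \{\mathbf{t},\mathbf{e}\}$: if the equilibrium $e(t)$ is constant, the collapse gives $E(\theta )$ rather than $E(\theta \mid t)$, and the equilibrium then falls under the ``uninformed'' specification (the paper's ``equivalently'' statements gloss over the same point, so this is minor). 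Second, your final identification in the non-empty case overreaches: what your computation (and the paper's) actually delivers is $e=E_{p}(\pi \mid t)$, i.e., the rational-expectations forecast under the \emph{original} signal function -- the paper concludes exactly this (``consistent with rational expectations, even if we do not change its signal function'') -- not the forecast of a fully informed private sector, which would be $a$ itself; the two coincide only when the signal reveals $a$ (as in the paper's later examples, where $t=a$). So the mathematics matches the paper's proof; only that labeling should be corrected.
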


\begin{proof}
$(i)$ By assumption, $G$ coincides with $G^{\ast }$ over the nodes $\theta
,a,t,e$. Therefore,%
\begin{eqnarray*}
E_{G}(\pi &\mid &t)=\sum_{\theta ,a,e}p_{G}(\theta ,a,e\mid t)\sum_{\pi
}p_{G}(\pi \mid \theta ,a,t,e)\pi \\
&=&\sum_{\theta ,a,e}p_{G}(\theta ,a,e\mid t)\sum_{\pi }p_{G}(\pi \mid x_{R(%
\mathbf{\pi })})\pi \\
&=&\sum_{\theta ,a,e}p(\theta ,a,e\mid t)\sum_{\pi }p_{G}(\pi \mid x_{R(%
\mathbf{\pi })})\pi
\end{eqnarray*}%
By assumption, $R(\mathbf{\pi })\subseteq \{\mathbf{\theta },\mathbf{a},%
\mathbf{t},\mathbf{e}\}$. Therefore,%
\[
p_{G}(\pi \mid x_{R(\mathbf{\pi })})=\sum_{a^{\prime }}p(a^{\prime }\mid
x_{R(\mathbf{\pi })})p(\pi \mid a^{\prime }) 
\]%
such that%
\begin{equation}
E_{G}(\pi \mid t)=\sum_{x_{R(\mathbf{\pi })}}p(x_{R(\mathbf{\pi })}\mid
t)\sum_{a^{\prime }}p(a^{\prime }\mid x_{R(\mathbf{\pi })})\sum_{\pi }p(\pi
\mid a^{\prime })\pi  \label{EG(pi|t)}
\end{equation}%
Since 
\[
\sum_{\pi }p(\pi \mid a^{\prime })\pi =E(a^{\prime }+\varepsilon )=a^{\prime
} 
\]%
we obtain%
\[
E_{G}(\pi \mid t)=\sum_{x_{R(\mathbf{\pi })}}p(x_{R(\mathbf{\pi })}\mid
t)\sum_{a^{\prime }}p(a^{\prime }\mid x_{R(\mathbf{\pi })})a^{\prime } 
\]%
The terms $p(x_{R(\mathbf{\pi })}\mid t)$ and $p(a^{\prime }\mid x_{R(%
\mathbf{\pi })})$ only derive from $(p(\theta ,a,t,e))$, and are therefore
invariant to $(\sigma _{\varepsilon }^{2},\sigma _{\eta }^{2})$.\medskip

\noindent $(ii)$ Once again, recall that by assumption, $R(\mathbf{\pi }%
)\subseteq \{\mathbf{\theta },\mathbf{a},\mathbf{t},\mathbf{e}\}$. In a
linear equilibrium, $a$ is a deterministic function of $\theta $, and $e$ is
a deterministic function of $t$. Therefore, if $R(\mathbf{\pi })$ includes $%
\mathbf{a}$ ($\mathbf{\theta }$), it is immaterial whether it also includes $%
\mathbf{\theta }$ ($\mathbf{a}$). Likewise, if $R(\mathbf{\pi })$ includes $%
\mathbf{t}$ ($\mathbf{e}$), it is immaterial whether it also includes $%
\mathbf{e}$ ($\mathbf{t}$).

It follows that we only need to examine three cases. First, suppose that $%
\mathbf{a\in }R(\mathbf{\pi })$ (or, equivalently, $\mathbf{\theta \in }R(%
\mathbf{\pi })$). Then,%
\begin{eqnarray*}
E_{G}(\pi &\mid &t)=\sum_{\theta ,a}p(\theta ,a,e\mid t)E(\pi \mid x_{R(%
\mathbf{\pi })}) \\
&=&\sum_{\theta ,a}p(\theta ,a,e\mid t)E(\pi \mid a) \\
&=&\sum_{a}p(a\mid t)E(\pi \mid a) \\
&=&E(\pi \mid t)
\end{eqnarray*}
where the second equality follows from the fact that under the objective
distribution, $\pi \perp (\theta ,t,e)\mid a$. It follows that if $\mathbf{%
a\in }R(\mathbf{\pi })$, the private sector's inflation forecast is
consistent with rational expectations, even if we do not change its signal
function.

Second, suppose that $\{\mathbf{a},\mathbf{\theta }\}\cap R(\mathbf{\pi })$
is empty, but $\mathbf{t\in }R(\mathbf{\pi })$ (or, equivalently, $\mathbf{%
e\in }R(\mathbf{\pi })$). Then, by definition,%
\[
E_{G}(\pi \mid t)=E(\pi \mid t) 
\]%
Then, as in the first case, the private sector's inflation forecast is
consistent with rational expectations, even if we do not change its signal
function.

The only remaining case is that $R(\mathbf{\pi })$ is empty. In this case,%
\[
E_{G}(\pi \mid t)=E(\pi ) 
\]%
This inflation forecast can be generated in a rational-expectations model,
in which the private sector is entirely uninformed of $\theta ,a$ - i.e., we
modify $p(t\mid \theta ,a)$ into $p^{\prime }(t\mid \theta ,a)$ that is
constant in $\theta ,a$.\medskip
\end{proof}

This result establishes that nothing \textquotedblleft
interesting\textquotedblright\ happens in this model when the private
sector's model does not invert the causal link between inflation and output.
In fact, the only departure from the rational-expectations benchmark occurs
when the private sector's model postulates that none of the other relevant
variables are a direct cause of inflation. In this case, the private sector
acts as if it is entirely uninformed of $\theta $ and $a$. It is easy to
show that in the unique linear equilibrium, $e=0$ with certainty, while the
central bank's policy is 
\[
a(\theta )=\frac{\theta }{2}
\]%
That is, the private sector's expectations are entirely rigid, as in the
rational-expectations benchmark with $\lambda =0$. This results in an overly
rigid central-bank policy.

\section{Examples}

In this section I analyze in detail two examples in which the private
sector's causal model violates the condition in Proposition \ref%
{prop_no_reverse}. As a result, the noise terms in the equations for
inflation and output play a non-trivial role in the private sector's
inflation forecast, and therefore also in the central bank's equilibrium
policy.

In both examples, I assume that $t=a$ - i.e., the private sector observes $a$
before forming its inflation forecast. (The private sector does not observe $%
\theta $, but this is immaterial.) Since $t$ coincides with $a$, we can
remove it from the model, such that $G^{\ast }$ can be written as:%
\[
\begin{array}{ccccc}
\theta & \rightarrow & a & \rightarrow & \pi \\ 
&  & \downarrow &  & \downarrow \\ 
&  & e & \rightarrow & y%
\end{array}%
\]%
That is, the central bank's action $a$ is an immediate consequence of the
exogenous variable $\theta $; actual inflation $\pi $ and the private
sector's inflation forecast $e$ are conditionally independent consequences
of $a$; and these two variables are the direct causes of real output $y$.

We will now analyze two specifications of the private sector's DAG $G$.

\subsection{Case 1: $G:\protect\theta \rightarrow a\rightarrow \protect\pi %
\leftarrow y$}

This DAG tells a causal story, according to which inflation is a consequence
of two independent causes: real output and the central bank's policy. This
causal model postulates absolute \textit{monetary neutrality}: it admits no
path of causal links from $a$ to $y$. Thus, the private sector's causal
model distorts the true model\ $G^{\ast }$ in two ways: it omits the
expectations variable $e$ and inverts the causal link between inflation and
output.

The private sector's inflation forecast after observing the central bank's
action $a$ is%
\begin{equation}
E_{G}(\pi \mid a)=\dint\nolimits_{\pi }\pi p_{G}(\pi \mid
a)=\dint\nolimits_{\pi }\dint\nolimits_{y}\pi p(y)p(\pi \mid a,y)
\label{inflationforecast}
\end{equation}%
Compare this with the following way to write the rational-expectations
inflation forecast:%
\begin{equation}
E(\pi \mid a)=\dint\nolimits_{\pi }\dint\nolimits_{y}\pi p(y\mid a)p(\pi
\mid a,y)  \label{correct_forecast}
\end{equation}%
The discrepancy arises because $p_{G}(\pi \mid a)$ involves an implicit
expectation over $y$ \textit{without} conditioning on $a$. (Note that since $%
\pi ,y\perp \theta \mid a$ according to both $G^{\ast }$ and $G$, $\theta $
does not feature in either of the two expressions.) Because the term $p(y)$
in (\ref{inflationforecast}) is actually $not$ independent of $a$, a change
in the central bank's strategy can lead to a change in $E_{G}(\pi \mid a)$%
.\medskip

\begin{proposition}
There is a unique linear equilibrium, given by%
\begin{eqnarray}
e(a) &=&\frac{1-\beta }{1-\beta \lambda }a  \label{linear_eq_neutrality} \\
a(\theta ) &=&\frac{1}{\left( \frac{1-\lambda }{1-\beta \lambda }\right)
^{2}+1}\theta  \nonumber
\end{eqnarray}%
where%
\begin{equation}
\beta =\frac{\sigma _{\varepsilon }^{2}}{\sigma _{\varepsilon }^{2}+\sigma
_{\eta }^{2}}  \label{delta}
\end{equation}
\end{proposition}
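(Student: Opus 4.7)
The plan is a guess-and-verify for a linear equilibrium of the form $a(\theta)=\alpha\theta$ and $e(a)=\gamma a$, followed by a uniqueness argument within the linear class.

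First I would compute the joint law of $(\pi,y)$ under the guess. Plugging $e=\gamma a$ into the structural equations gives $\pi=a+\varepsilon$ and $y=(1-\lambda\gamma)a+\varepsilon+\eta$, so conditional on $a$ the pair $(\pi,y)$ is bivariate normal with means $a$ and $(1-\lambda\gamma)a$ and with $\mathrm{Cov}(\pi,y\mid a)=\sigma_{\varepsilon}^{2}$, $\mathrm{Var}(y\mid a)=\sigma_{\varepsilon}^{2}+\sigma_{\eta}^{2}$. The Gaussian regression formula then yields
\[
E(\pi\mid a,y)=a+\beta\bigl(y-(1-\lambda\gamma)a\bigr),
\]
with $\beta$ as in (\ref{delta}).

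Second, I would evaluate the subjective forecast using (\ref{inflationforecast}). Interchanging the order of integration gives
\[
E_{G}(\pi\mid a)=\int p(y)\,E(\pi\mid a,y)\,dy=a+\beta\bigl(E(y)-(1-\lambda\gamma)a\bigr).
\]
Because $\theta$ has mean zero, so does $a=\alpha\theta$, hence so does $y$ in the candidate equilibrium; therefore $E_{G}(\pi\mid a)=\bigl(1-\beta+\beta\lambda\gamma\bigr)a$. Imposing the fixed-point condition $e(a)=\gamma a=E_{G}(\pi\mid a)$ yields $\gamma(1-\beta\lambda)=1-\beta$, which is the first formula in the proposition (and has a unique solution since $\beta\lambda<1$).

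Third, I would solve the central bank's problem given $e=\gamma a$. Then $y=(1-\lambda\gamma)a+\varepsilon+\eta$ and the conditional expected loss becomes
\[
(1-\lambda\gamma)^{2}a^{2}+(a-\theta)^{2}+2\sigma_{\varepsilon}^{2}+\sigma_{\eta}^{2}.
\]
This is strictly convex in $a$, so the first-order condition gives $a(\theta)=\theta/[(1-\lambda\gamma)^{2}+1]$. A one-line computation using $1-\lambda\gamma=(1-\lambda)/(1-\beta\lambda)$ delivers the second formula. Strict convexity together with the uniqueness of $\gamma$ establishes uniqueness within linear strategies.

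The main obstacle is the second step: one has to recognize that in the Bayesian-network factorization (\ref{inflationforecast}), the marginal $p(y)$ (rather than $p(y\mid a)$) is the factor associated with the root $y$ in $G$. This is precisely where the misspecification bites: the subjective forecast behaves as if $y$ were set to its unconditional mean zero in the regression $E(\pi\mid a,y)$, producing the nontrivial attenuation factor $-\beta(1-\lambda\gamma)a$ and, through the fixed point, the dependence of $\gamma$ on the noise ratio $\beta$.
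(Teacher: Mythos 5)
Your verification step is correct and follows essentially the same computation as the paper: conditioning $(\pi,y)$ on $a$, applying the Gaussian regression formula to get $E(\pi\mid a,y)=a+\beta\bigl(y-(1-\lambda\gamma)a\bigr)$, integrating against the marginal $p(y)$ as dictated by the factorization, and closing the model with the central bank's strictly convex problem. The existence of the stated equilibrium is fully established by your argument.

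The gap is in uniqueness. The proposition asserts a unique \emph{linear} equilibrium, and in this paper ``linear'' is strictly weaker than proportional: in Case 2 the author explicitly distinguishes linear from \emph{strongly linear} (proportional) equilibria and remarks that ruling out non-proportional linear equilibria ``could be demonstrated transparently in Case 1,'' i.e., it is part of what this proposition's proof must deliver. Your argument imposes $a(\theta)=\alpha\theta$ and $e(a)=\gamma a$ at the outset, and the key simplification $E(y)=0$ is itself a consequence of that guess; so ``strict convexity together with the uniqueness of $\gamma$'' only gives uniqueness within the proportional class and does not exclude linear equilibria with intercepts, in which $E(a)\neq 0$ and the forecast carries a constant term. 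The paper avoids this by not guessing proportionality: it derives the functional equation $e(a)=(1-\beta)a+\beta\lambda e(a)+\beta E(y)$ for a general (linear) forecast, uses $E(y)=E(a)-\lambda E(e(a))$ and $E(e(a))=E(a)$ to solve $e(a)=\frac{1-\beta}{1-\beta\lambda}a+\frac{\beta-\beta\lambda}{1-\beta\lambda}E(a)$, and then shows from the central bank's first-order condition and $E(\theta)=0$ that $E(a)=0$, which is what forces the intercepts to vanish. Your proof can be repaired along the same lines (write $e(a)=\gamma a+c$, $a(\theta)=\alpha\theta+d$, match coefficients, and show $c=d=0$), but as written the uniqueness claim is not established.
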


\begin{proof}
Let us begin by deriving a formula for the private sector's inflation
forecast. Denote%
\[
e(a)=E_{G}(\pi \mid a)=\dint\nolimits_{y}p(y)E(\pi \mid a,y) 
\]%
Since $\pi =a+\varepsilon $, 
\[
E(\pi \mid a,y)=a+E(\varepsilon \mid a,y) 
\]%
By the Phillips Curve,%
\[
\varepsilon +\eta =y-a+\lambda e(a) 
\]%
For given $a$ and $y$, the R.H.S is a constant, whereas the L.H.S is a sum
of two independent variables that are normally distributed with mean zero.
Therefore, to calculate $E(\varepsilon \mid a,y)$, we can apply the standard
formula for $E(X\mid X+Z)$ when $X$ and $Z$ are independent normal
variables, and obtain%
\[
E(\varepsilon \mid a,y)=\beta (y-a+\lambda e(a)) 
\]%
where%
\[
\beta =\frac{\sigma _{\varepsilon }^{2}}{\sigma _{\varepsilon }^{2}+\sigma
_{\eta }^{2}} 
\]

We can now write%
\[
e(a)=\int_{y}p(y)\left[ a+\beta y-\beta a+\beta \lambda e(a)\right]
=(1-\beta )a+\beta \lambda e(a)+\beta E(y) 
\]%
Since $\pi =a+\varepsilon $ and $E(\varepsilon )=0$, $E(\pi )=E(a)$.
Plugging the Phillips curve, we obtain%
\[
E(y)=E(a)-\lambda E(e(a)) 
\]%
and therefore,%
\[
e(a)=(1-\beta )a+\beta \lambda e(a)+\beta \lbrack E(a)-\lambda E(e(a))] 
\]

This functional equation defines $e(a)$. Taking expectations, we obtain%
\begin{eqnarray*}
E(e(a)) &=&(1-\beta )E(a)+\beta \lambda E(e(a))+\beta E(a)-\beta \lambda
E(e(a)) \\
&=&E(a)=E(\pi )
\end{eqnarray*}%
Plug\ this into the expression for $e(a)$, and get:%
\[
e(a)=\frac{1-\beta }{1-\beta \lambda }a+\frac{\beta -\beta \lambda }{1-\beta
\lambda }E(a) 
\]%
Plugging the expression for $\beta $, we obtain%
\[
e(a)=\frac{\sigma _{\eta }^{2}}{\sigma _{\eta }^{2}+(1-\lambda )\sigma
_{\varepsilon }^{2}}a+\frac{(1-\lambda )\sigma _{\varepsilon }^{2}}{\sigma
_{\eta }^{2}+(1-\lambda )\sigma _{\varepsilon }^{2}}E(a) 
\]%
Plugging the expression we obtained for $e(a)$, the expected output
conditional on $a$ is%
\[
E(y\mid a)=\frac{1-\lambda }{1-\beta \lambda }a-\lambda \frac{\beta -\beta
\lambda }{1-\beta \lambda }E(a) 
\]

We can now write down the central bank's expected payoff from $a$ given $%
\theta $ as follows:%
\[
\left[ \frac{1-\lambda }{1-\beta \lambda }a-\lambda \frac{\beta -\beta
\lambda }{1-\beta \lambda }E(a)\right] ^{2}+(a-\theta )^{2}
\]%
Our definition of linear equilibrium adopts the interim approach - i.e.,
when finding the optimal $a$ given $\theta $, we need to take $E(a)$ \textit{%
as given}.\ (This does not matter in fact; we could also take the ex-ante
approach and the result would be the same, although this is an artifact of
the linearity of the model.) The first-order condition with respect to $a$ is

\[
\left( \frac{1-\lambda }{1-\beta \lambda }\right) \left[ \frac{1-\lambda }{%
1-\beta \lambda }a-\lambda \frac{\beta -\beta \lambda }{1-\beta \lambda }E(a)%
\right] +a-\theta =0
\]%
Taking the expectation with respect to $a$ and recalling that $E(\theta )=0$%
, it is immediate that $E(a)=0$. Therefore,%
\[
a(\theta )=\frac{1}{\left( \frac{1-\lambda }{1-\beta \lambda }\right) ^{2}+1}%
\theta 
\]%
We have now obtained the precise formulas for the unique linear equilibrium,
given by (\ref{linear_eq_neutrality}).\medskip 
\end{proof}

Thus, when $\lambda =1$, $e(a)=a=E(\pi \mid a)$ - i.e., the private sector's
inflation forecast coincides with the rational-expectations benchmark.
Despite the private sector's misspecified Phillips Curve, it ends up having
correct expectations and therefore the central bank's strategy coincides
with the rational-expectations benchmark.

However, when $\lambda <1$ - i.e., when anticipated inflation has real
effects - the inflation forecast is \textquotedblleft
rigid\textquotedblright\ in the sense of being a convex combination of the
correct conditional expected inflation $E(\pi \mid a)$ and the ex-ante,
unconditional expected inflation $E(\pi )$. As a result, the central bank's
equilibrium strategy is less responsive to $\theta $ than in the
rational-expectations benchmark.

To understand the reason for the private sector's partially responsive
expectations, recall that according to the DAG $G:\theta \rightarrow
a\rightarrow \pi \leftarrow y$, the private sector erroneously regards $y$
as an \textit{exogenous} variable that affects $\pi $, and therefore assigns
some weight to the ex-ante expected value of $y$ when forming its inflation
forecast. Because $y$ is in fact a consequence of $a$, the private sector
ends up assigning weight to the ex-ante expectation of $a$, thus failing to
fully condition on the actual realization of $a$.

The \textit{extent} of this failure depends on the relative magnitudes of $%
\sigma _{\varepsilon }^{2}$ and $\sigma _{\eta }^{2}$. As the Phillips
relation becomes less noisy (relative to the relation between the central
bank's action and inflation), the erroneous weight on $E(a)$ increases and
the deviation from rational expectations is exacerbated. This
\textquotedblleft expectational rigidity\textquotedblright\ leads to a
policy-attenuation effect that increases with $\sigma _{\varepsilon
}^{2}/\sigma _{\eta }^{2}$.

The intuition behind the comparative statics is as follows. The private
sector tries to account for fluctuations in $\pi $ (conditional on $a$) by
the variation in $y$, mistakenly treating the latter as exogenous. The
variables $\pi $ and $y$ are normally distributed. When $\sigma _{\eta
}^{2}/\sigma _{\varepsilon }^{2}$ is low, the definition of conditional
expectation of normal variables implies that $y$ gets a large weight in the
private sector's inflation forecast, and therefore the model
misspecification error generates a larger forecast error - which in turn
impels the central bank to adopt a more rigid policy.

\subsection{Case 2: $G:\protect\theta \rightarrow a\rightarrow y\rightarrow 
\protect\pi $}

This DAG tells a causal story, according to which the central bank's action
has a causal effect on real output, which in turn is the sole direct cause
of inflation. As in Case 1, this specification of $G$ inverts the true
causal link between output and inflation. However, unlike Case 1, this
causal model regards $y$ as an endogenous variable that mediates the effect
of monetary policy on inflation.

The private sector's inflation forecast after observing the central bank's
action $a$ is%
\begin{equation}
E_{G}(\pi \mid a)=\dint\nolimits_{\pi }\pi p_{G}(\pi \mid
a)=\dint\nolimits_{\pi }\dint\nolimits_{y}\pi p(y\mid a)p(\pi \mid y)
\label{case2_forecast}
\end{equation}%
Compare this expression with (\ref{correct_forecast}). The error in (\ref%
{case2_forecast}) lies in the term $p(\pi \mid y)$, which fails to condition 
$\pi $ on $a$. This failure reflects the assumption, embedded in $G$, that $%
\pi $ is independent of $a$ conditional on $y$. This assumption is
inconsistent with the true model $G^{\ast }$.

A linear equilibrium is \textit{strongly linear} if $e(a)$ is proportional
to $a$ and $a(\theta )$ is proportional to $\theta $. The next result
restricts attention to strongly linear equilibria, in order to avoid the
minor technical distraction of proving that every linear equilibrium must be
strongly linear (this could be demonstrated transparently in Case 1, but
unfortunately not in the present case).\medskip 

\begin{proposition}
There is a unique strongly linear equilibrium, given by%
\begin{eqnarray}
e(a) &=&\gamma a  \label{linear_eq_case2} \\
a(\theta ) &=&\delta \theta  \nonumber
\end{eqnarray}%
where $\gamma $ and $\delta $ are uniquely given by the pair of equations:%
\begin{eqnarray}
\gamma &=&\frac{(1-\gamma \lambda )^{2}\delta ^{2}\sigma _{\theta
}^{2}+(1-\gamma \lambda )\sigma _{\varepsilon }^{2}}{(1-\gamma \lambda
)^{2}\delta ^{2}\sigma _{\theta }^{2}+\sigma _{\varepsilon }^{2}+\sigma
_{\eta }^{2}}  \label{gamma} \\
\delta &=&\frac{1}{(1-\gamma \lambda )^{2}+1}  \nonumber
\end{eqnarray}
\end{proposition}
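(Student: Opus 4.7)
The plan is to posit the strongly-linear ansatz $a(\theta)=\delta\theta$, $e(a)=\gamma a$, substitute it into the two equilibrium conditions, and show that these conditions collapse to exactly the system (\ref{gamma}); existence and uniqueness of the equilibrium then reduce to showing that this two-equation system has a unique solution.

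To derive the $\gamma$-equation, I would first compute the joint distribution of $(\pi,y)$ implied by the ansatz: the structural equations give $\pi=\delta\theta+\varepsilon$ and $y=(1-\lambda\gamma)\delta\theta+\varepsilon+\eta$, which is centered bivariate normal with $\mathrm{Var}(y)=(1-\lambda\gamma)^{2}\delta^{2}\sigma_{\theta}^{2}+\sigma_{\varepsilon}^{2}+\sigma_{\eta}^{2}$ and $\mathrm{Cov}(\pi,y)=(1-\lambda\gamma)\delta^{2}\sigma_{\theta}^{2}+\sigma_{\varepsilon}^{2}$. Because $E(\pi\mid y)=[\mathrm{Cov}(\pi,y)/\mathrm{Var}(y)]\,y$ is linear in $y$, integrating the factorization (\ref{case2_forecast}) against $p(y\mid a)$ simply replaces $y$ by $E(y\mid a)=(1-\lambda\gamma)a$, so the private-sector forecast reduces to $[\mathrm{Cov}(\pi,y)/\mathrm{Var}(y)]\cdot(1-\lambda\gamma)\,a$; the self-consistency condition $E_{G}(\pi\mid a)=\gamma a$ then yields the first equation of (\ref{gamma}). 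For the $\delta$-equation, plugging $e=\gamma a$ into the output equation gives $y=(1-\lambda\gamma)a+\varepsilon+\eta$, so the expected loss conditional on $\theta$ equals $(1-\lambda\gamma)^{2}a^{2}+(a-\theta)^{2}+2\sigma_{\varepsilon}^{2}+\sigma_{\eta}^{2}$, whose first-order condition in $a$ at once yields $\delta=1/[(1-\lambda\gamma)^{2}+1]$.

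The main obstacle is uniqueness. Substituting this closed form for $\delta$ into the $\gamma$-equation produces a single fixed-point equation $\gamma=\Phi(\gamma)$. Since $1-\lambda\gamma\in[1-\lambda,1]$ whenever $\gamma\in[0,1]$ and $\lambda\in(0,1)$, a direct computation shows that $\Phi$ maps $[0,1]$ continuously into $[0,1]$, so Brouwer yields existence. For uniqueness I would clear denominators to obtain a low-degree polynomial equation in $\gamma$ and verify, either by a sign analysis of its derivative or by bounding $|\Phi'(\gamma)|<1$ on the admissible interval, that this polynomial has exactly one root there. Everything else in the argument is a direct application of the Bayesian-network factorization together with the standard regression formula for jointly normal variables.
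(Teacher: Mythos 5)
Your proposal is correct and follows essentially the same route as the paper's proof: impose the strongly linear ansatz, use joint normality of $(\pi ,y)$ to get the linear projection $E(\pi \mid y)$ (the paper splits it into $E[a\mid \cdot ]$ and $E[\varepsilon \mid \cdot ]$ terms, you compute $\mathrm{Cov}(\pi ,y)/\mathrm{Var}(y)$ directly—the same calculation), replace $y$ by $E(y\mid a)=(1-\gamma \lambda )a$ by linearity, obtain $\delta $ from the first-order condition, and reduce uniqueness to the one-dimensional fixed-point equation $\gamma =\Phi (\gamma )$. One caveat on your closing step: the contraction bound $|\Phi ^{\prime }(\gamma )|<1$ you mention as an alternative need not hold uniformly on $[0,1]$ for all parameter values (e.g.\ $\lambda $ near $1$ with suitable variances), so the robust finish is the single-crossing argument you also propose, which is exactly the paper's: $\Phi $ is continuous and monotonically decreasing on $[0,1]$, with $\Phi (0)>0$ and $\Phi (1)<1$, hence a unique intersection with the identity.
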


\begin{proof}
Let us begin by deriving a formula for the private sector's inflation
forecast. Denote%
\begin{equation}
e(a)=E_{G}(\pi \mid a)=\dint\nolimits_{y}p(y\mid a)E(\pi \mid y)
\label{e(a)_case2}
\end{equation}%
Guessing a\ strongly linear equilibrium, suppose $e(a)=\gamma a$ and $%
a(\theta )=\delta \theta $. Therefore,%
\[
y=a-\lambda \gamma a+\varepsilon +\eta 
\]%
Given $\theta $, the central bank effectively chooses $a$ to minimize%
\[
E(y^{2}\mid a)+(a-\theta )^{2}=(1-\gamma \lambda )^{2}a^{2}+(a-\theta )^{2}
\]%
The first-order condition with respect to $a$ pins down the expression for $%
\delta $ given by (\ref{gamma}).

Our remaining task is to derive the expression for $\gamma $. To do so, let
us first obtain an expression for $E(\pi \mid y)$. By the true equations for
inflation and output,%
\begin{eqnarray*}
E(\pi &\mid &y)=E(a+\varepsilon \mid a-\lambda e(a)+\varepsilon +\eta =y) \\
&=&E(a+\varepsilon \mid a-\lambda \gamma a+\varepsilon +\eta =y)
\end{eqnarray*}%
which is equal to%
\[
\frac{1}{1-\gamma \lambda }E[(1-\gamma \lambda )a\mid (1-\gamma \lambda
)a+\varepsilon +\eta =y]+E[\varepsilon \mid (1-\gamma \lambda )a+\varepsilon
+\eta =y] 
\]

The expression $(1-\gamma \lambda )a+\varepsilon +\eta $ is a sum of
independent, normally distributed variables with mean zero. Denote%
\[
\sigma _{a}^{2}=\delta ^{2}\sigma _{\theta }^{2} 
\]%
Then, using the conditional-expectation rule for such variables,%
\[
E[a\mid (1-\gamma \lambda )a+\varepsilon +\eta =y]=\frac{(1-\gamma \lambda
)\delta ^{2}\sigma _{\theta }^{2}}{(1-\gamma \lambda )^{2}\delta ^{2}\sigma
_{\theta }^{2}+\sigma _{\varepsilon }^{2}+\sigma _{\eta }^{2}}\cdot y 
\]%
and%
\[
E[\varepsilon \mid (1-\gamma \lambda )a+\varepsilon +\eta =y]=\frac{\sigma
_{\varepsilon }^{2}}{(1-\gamma \lambda )^{2}\delta ^{2}\sigma _{\theta
}^{2}+\sigma _{\varepsilon }^{2}+\sigma _{\eta }^{2}}\cdot y 
\]%
Therefore,%
\[
E(\pi \mid y)=\frac{(1-\gamma \lambda )\delta ^{2}\sigma _{\theta
}^{2}+\sigma _{\varepsilon }^{2}}{(1-\gamma \lambda )^{2}\delta ^{2}\sigma
_{\theta }^{2}+\sigma _{\varepsilon }^{2}+\sigma _{\eta }^{2}}\cdot y 
\]

Now, plug this expression in (\ref{e(a)_case2}) and obtain%
\[
e(a)=\frac{(1-\gamma \lambda )\delta ^{2}\sigma _{\theta }^{2}+\sigma
_{\varepsilon }^{2}}{(1-\gamma \lambda )^{2}\delta ^{2}\sigma _{\theta
}^{2}+\sigma _{\varepsilon }^{2}+\sigma _{\eta }^{2}}\cdot E(y\mid a) 
\]%
By the inflation and output equations,%
\begin{eqnarray*}
E(y &\mid &a)=E(\pi -\lambda e(a)+\eta \mid a) \\
&=&E(a-\lambda \gamma a+\varepsilon +\eta \mid a) \\
&=&(1-\gamma \lambda )a
\end{eqnarray*}%
Therefore, 
\[
e(a)=\frac{(1-\gamma \lambda )\delta ^{2}\sigma _{\theta }^{2}+\sigma
_{\varepsilon }^{2}}{(1-\gamma \lambda )^{2}\delta ^{2}\sigma _{\theta
}^{2}+\sigma _{\varepsilon }^{2}+\sigma _{\eta }^{2}}\cdot (1-\gamma \lambda
)a=\gamma a 
\]%
where $\gamma $ is given by (\ref{gamma}).

It remains to establish uniqueness of $(\gamma ,\delta )$. Clearly, $\delta $
is a monotone function of $\gamma $ in the relevant range. Therefore, it
suffices to show that $\gamma $ is unique. Plugging the equation for $\delta 
$ in the equation for $\gamma $, we obtain%
\[
\gamma =\frac{\frac{(1-\gamma \lambda )^{2}}{[(1-\gamma \lambda )^{2}+1]^{2}}%
\sigma _{\theta }^{2}+(1-\gamma \lambda )\sigma _{\varepsilon }^{2}}{\frac{%
(1-\gamma \lambda )^{2}}{[(1-\gamma \lambda )^{2}+1]^{2}}\sigma _{\theta
}^{2}+\sigma _{\varepsilon }^{2}+\sigma _{\eta }^{2}}
\]

It can be checked that the R.H.S of this equation is continuously and
monotonically decreasing in $\gamma $ in the range $[0,1]$. Moreover, it is
strictly positive when $\gamma =0$ and strictly below $1$ when $\gamma =1$.
Therefore, there is a unique intersection between the curve given by the
R.H.S and the line given by the L.H.S.\medskip 
\end{proof}

As in Case 1, the variance of the noise terms in the inflation and output
equations plays a role in the strongly linear equilibrium in the present
case. However, this role is different in a number of ways.

First, the effect of the noisiness of the Phillips Curve (i.e. $\sigma
_{\eta }^{2}$) on expectational rigidity goes in the opposite direction.
Expectations become more rigid as the Phillips relation becomes \textit{less}
precise. The intuition is that in Case 2, $G$ posits that the effect of $a$
on $\pi $ is mediated by $y$. As the statistical relation between $\pi $ and 
$y$ becomes noisier, the perceived mediation effect is attenuated, and this
means that the perceived indirect effect of $a$ on $\pi $ becomes weaker.

Second, unlike Case 1, the private sector's expectations depart from the
rational-expectations benchmark even when $\lambda =1$. That is, even when
anticipated inflation fully offsets the effect of inflation on output, the
private sector's causal model leads to incorrect, rigid inflation forecasts.
In particular, when $\lambda =1$ and $\sigma _{\varepsilon }^{2}$ (which
measures the noisiness of the mapping from $a$ to $\pi $) gets large, $%
\gamma $ approaches $\frac{1}{2}$ - i.e., the inflation forecast is only
\textquotedblleft half-responsive\textquotedblright\ to changes in $a$.

Finally, unlike Case 1, the variance of the exogenous inflation target $%
\theta $ plays a role in the characterization of the linear equilibrium. As $%
\sigma _{\theta }^{2}$ becomes larger, the private sector's expectations
become less rigid, and as a result the central bank's policy approaches the
rational-expectations benchmark. The intuition is that larger fluctuations
in $\theta $ translate to larger fluctuations in $a$. This in turn implies
that the private sector's interpretation of fluctuations in inflation and
output becomes dominated by the fluctuations in $a$, which is the correct
interpretation; the misperception of the role of $\sigma _{\eta }^{2}$ and $%
\sigma _{\varepsilon }^{2}$ becomes less important for the private sector's
expectations.

\section{Discussion}

The comparison between Cases 1 and 2 shows that the details of the private
sector's causal misinterpretation of the Phillips Curve matter for the
qualitative properties of the equilibrium in the interaction between the
central bank and the private sector. However, the broad lesson that $\sigma
_{\eta }^{2}$ and $\sigma _{\varepsilon }^{2}$ matter for the equilibrium
holds in both cases because they share the reverse-causality error that
misperceives the direction of causality between output and inflation.

Note, however, that not every $G$ that exhibits this reverse causality will
give rise to this effect. For example, suppose $R(\mathbf{\pi })=\{\mathbf{y}%
\}$ and $R(\mathbf{y})=\emptyset $. Then, $G$ admits no causal path from $a$
or $e$ to $\pi $. As in earlier examples that exhibited this feature, the
private sector's equilibrium expectations will be entirely rigid: $e=0$ with
probability one, leading to the most rigid central-bank policy that is
possible in this model, without any sensitivity to $\sigma _{\eta }^{2}$ and 
$\sigma _{\varepsilon }^{2}$. Another example is $R(\mathbf{\pi })=\{\mathbf{%
a},\mathbf{y}\}$ and $R(\mathbf{y})=\{\mathbf{a}\}$. In this case, $G$
induces rational expectations. This can be shown algebraically, or using
basic equivalence results from the Bayesian-network literature (see Spiegler
(2016,2020b) for details). Thus, reverse causality is a necessary condition
for the noise-sensitivity of central-bank policy, but not a sufficient one.

The noise-sensitivity of linear equilibria is of interest in light of the
\textquotedblleft \textit{caution principle}\textquotedblright\ attributed
to Brainard (1967), which has been suggested as an explanation of
monetary-policy attenuation. A high $\sigma _{\varepsilon }^{2}$ captures a
situation in which the central bank is highly uncertain about the effect of
monetary policy on inflation. However, given our conventional
linear-quadratic parameterization of the model, this uncertainty has no
effect on the central bank's policy under rational expectations. However,
when the private sector's model reverses the causal link between inflation
and output, high $\sigma _{\varepsilon }^{2}$ can lead to an attenuated
policy - although, as we saw, the extent of this effect depends on details
of the private sector's model. The reason is that $\sigma _{\varepsilon }^{2}
$ affects the private sector's interpretation of inflation fluctuations, and
therefore its inflation forecast conditional on its information. The lesson
is that Phillips-Curve disagreements between the central bank and the
private sector can rationalize policy distortions due to uncertainty about
the consequences of monetary policy - even in settings that otherwise could
not give rise to policy distortions.

The policy distortion that emerged in this paper takes the form of \textit{%
attenuation} - i.e., the policy is less responsive to exogenous shocks than
in the rational-expectations benchmark. However, I do not want to stress
this point too strongly. The important effect that our analysis has
highlighted is the attenuated response of the private sector's inflation
forecast, as a result of its causally misperceived Phillips Curve. The fact
that this effect translated to a policy-attenuation effect is an artifact of
how the model treated the central bank's private information $\theta $ -
namely, as an exogenous inflation target. Under alternative specifications
of the central bank's preferences, rigid private-sector forecasts would lead
to \textit{over-responsive} central-bank policy. One could also imagine a
model in which $\theta $ has no direct payoff relevance, and instead plays a
role as a parameter in the inflation/output equations.\footnote{%
I thank Stephane Dupraz for making these observations.} Depending on how the
private sector incorporates this dependency into its causal model, such a
specification could lead to a more complicated, less transparent relation
between the private sector's forecast errors and the central bank's policy
distortion. Since paper is a first attempt to incorporate causal
disagreements into familiar models of monetary policy, I prioritized
pedagogical considerations and refrained from adding these complications.
However, I believe they merit further study.\bigskip

\end{document}